\documentclass{article}
\usepackage{a4wide}
\usepackage{amssymb,amsfonts}

\def\H{\mathop{\hbox{\bf H}}}
\def\qed{$\square$}

\let\phi\varphi
\newtheorem{theorem}{Theorem}
\newtheorem{claim}[theorem]{Claim}
\newtheorem{lemma}[theorem]{Lemma}
\newtheorem{DDDefinition}[theorem]{Definition}
\def\enddefinition{\end{DDDefinition}\egroup\medbreak}
\def\proof{\begin{demo}{Proof}}
\def\endproof{\enspace\hfill\qed\end{demo}\medbreak}
\makeatletter
\def\definition{\bgroup\def\@begintheorem##1##2{\trivlist
  \item[\hskip\labelsep{\bfseries ##1\ ##2}]}\begin{DDDefinition}}
\makeatother
\newenvironment{demo}[1]
{\par\medbreak\noindent{\bf #1\enskip}\rm\ignorespaces}{}

\begin{document}

\title{Gruppen secret sharing\\
{\normalsize or}\\
how to share several secrets if you must?}
\author{L\'aszl\'o Csirmaz\thanks{This research was partially
supported by the ``Lend\"ulet Program'' of the Hungarian Academy of
Sciences}}
\date{\small Central European University, Budapest\\ R\'enyi Institute,
Budapest}
\maketitle

\begin{abstract}
Each member of an $n$-person team has a secret, say a password. The $k$
out of $n$ {\em gruppen secret sharing} requires that any group of
$k$ members should be able to recover the secrets of the other $n-k$ members,
while any group of $k-1$ or less members should have no information on the
secret of other team member {\em even if other secrets leak
out}. We prove that when all secrets are chosen independently
and have size $s$, then each team member must have a share of size at
least $(n-k)s$, and we present a scheme which achieves this bound when $s$
is large enough. This
result shows a significant saving over $n$ independent applications of
Shamir's $k$ out of $n-1$ threshold schemes which assigns shares of 
size $(n-1)s$ to each team member independently of $k$.

We also show how to set up such a scheme without any trusted dealer, and how
the secrets can be recovered, possibly multiple times, without leaking 
information. We
also discuss how our scheme fits to the much-investigated multiple secret
sharing methods.

\smallskip
\noindent{\bf Keywords:} multiple secret sharing; complexity;
threshold scheme; secret sharing; interpolation.

\smallskip
\noindent{\bf MSC numbers:} 94A62, 90C25, 05B35.
\end{abstract}

\section{Introduction}

A team has $n$ members, and each member has a secret,
say a password. As a safety caution, they want each secret to be
distributed among other group members so that it could be
recovered in the case any of them would forget it. Also, none of them trusts 
the
others, thus they want their secrets to be independent of the information
held by any group $k-1$ or less team members -- even if other secrets leak
out. This goal can be achieved by
distributing all secrets using Shamir's $k$ out of $n-1$ threshold secret
sharing method, see \cite{shamir}. Assuming that all secrets are $s$ bit
long, the total size of the information each team member must remember will 
be $n\cdot s$ bits: $s$ bits for each team member plus her own password.

The question is: can we do better if the secrets are distributed 
simultaneously?

This question comes under the name of {\em multiple secret sharing},
which has two distinct flavors:
\begin{enumerate}
\item Different secrets are to be recovered by different access structures, 
usually only
one of the secrets will ever be recovered; also known as multiple {\em
secret-sharing}. A typical question is how much the
information to be remembered by each member can be squeezed compared to the
independent applications of traditional secret sharing. Results in 
this direction can be found in \cite{blundo-masucci, msss, ferras, jackson}.
\item A group recovers multiple secrets, this is {\em
multiple-secret} sharing. In this case in order
to decrease private information, unconditional security is traded
for computational security. See, e.g., \cite{das-adhikari, lin-yeh, zhao-zhao}.
\end{enumerate}
In both cases, {\em verifiable} schemes can also considered, where
participants can check whether shares provided by others are genuine or
not, look at \cite{das-adhikari, zhao-zhao}.

Our problem, which we call {\em $k$ out of $n$ gruppen secret sharing} 
(see Section \ref{sec:defs}), 
belongs to the first flavor (which, in fact, is more general
than the second one), as each secret is to be recovered by a different
collection of team members. In Sections \ref{sec:bounds} and
\ref{sec:protocol}
we concentrate on
the typical secret sharing question, and determine the amount
of information each participant must receive by proving a lower bound, and
giving a matching construction.

Section \ref{sec:recovery} looks at how one of the secrets can be recovered.
This is an intricate issue in a multiple secret settings as --
understandably -- we do not want to compromise others secrets when 
recovering someone's (allegedly) forgot password. This requirement is easy
to overlook. Interestingly, the ``private recovery process'' our
construction from Section \ref{sec:bounds} suggests still leaks out some 
information. We propose a perfect
solution at the expense of increasing the round complexity of the protocol.

Secret sharing methods, just as our construction does, usually refer to a
trusted {\em dealer} who knows all the secrets, distributes the shares
privately, and disappears after doing her job keeping all secrets. The
homomorphic property of our construction suggests another setup performed by
the participants without any trusted third party. In Section
\ref{sec:beginning} we look at it in details, and conclude that it is, in
fact, secure, and does not leak out information.

\section{The ``gruppen secret sharing''}\label{sec:defs}

Multiple secret sharing schemes are a natural generation of single secret
sharing schemes, and been defined formally, among others, in
\cite{blundo-masucci}. Informally, in such a scheme we have a set $P$ of
{\em participants}, and $n$ {\em access structures} $\mathcal A_1$, $\dots$,
$\mathcal A_n$, that is, upward closed collections of subsets of the
participants. The {\em dealer} picks (or receives) an $n$-tuple of secrets
$\langle s_1,\dots, s_n\rangle$ from some finite domain with a given
distribution (typically secrets are independent and uniformly
distributed), and computes, using some randomness, the {\em shares} of the
participants.

\begin{definition}\label{def:sound-and-perfect}
The multiple secret sharing scheme $\mathcal S $ is {\em sound} if qualified
subsets can recover the secret: whenever $A\in \mathcal A_i$, then members
of $A$, using their private information only, can recover the secret $s_i$.

The scheme $\mathcal S$ is {\em perfect}, if $B\subseteq P$ is not enabled
to recover the secret $s_i$ (that is, $B\notin \mathcal A_i$), then members
in $B$, {\em even knowing all other secrets $s_j$ for $j\not= i$}, have no more information on
$s_i$ than that already conveyed by the legally known values.
\end{definition}

In particular, if the secrets are independently chosen, then the totality
of shares of $B \notin \mathcal A_i$ should give no information on $s_i$
whatsoever {\em even given all other secrets}. 

\bigskip
A {\em gruppen secret sharing} scheme is a special, nevertheless
interesting,
case of multiple secret sharing schemes. Each participant has a secret drawn 
uniformly and 
independently from some finite domain $S$, say a password. Each password
should be recoverable by any $k$ out of the remaining $n-1$ participants,
but no coalition of $k-1$ or less participants should know anything about 
the remaining secrets.

\begin{definition}\label{def:gruppen}
In a {\em $k$ out of $n$ gruppen secret sharing scheme} there are $n$
participants in $P$, participant $i\in P$ has a secret $s_i$ drawn 
uniformly and independently from some domain, and the access structure
$\mathcal A_i$ -- whose members should be able to recover $s_i$ -- consists
of all subsets of $P-\{i\}$ with at least $k$ members, that is the $k$ out
of $n-1$ threshold structure on $P-\{i\}$.
\end{definition}
For definiteness we assume that all secrets are $s$ bit long (random)
0--1 sequences where $s$ is large enough.

\bigskip
It is easy to construct a sound and perfect $k$ out of $n$ gruppen
secret sharing scheme. For each participant $i\in P$, the dealer distributes
the secret $s_i$ to members of $\mathcal A_i$ {\em independently} using
Shamir's $k$ out of $n-1$ threshold scheme. As this latter scheme
is also {\em perfect}, i.e., everyone gets a minimal
size share (which is $s$ bits), everyone receives $n-1$ shares of $s$ bit
each, next to his $s$ bit secret, which is a total $n\cdot s$ bits to
remember. Is there any way to do it better? The next
theorem answers this question.

\begin{theorem}\label{thm:main}
a) In a perfect and sound $k$ out of $n$ gruppen secret sharing scheme each
participant must receive a share of at least $(n-k)s$ bits. 

b) For $s$ large enough there is a perfect and sound $k$ out of $n$ gruppen 
secret sharing scheme
where every participant receives exactly $(n-k)s$ bit share.
\end{theorem}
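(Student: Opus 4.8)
The plan is to handle part (a) as an entropy inequality and part (b) as an explicit single-polynomial construction, exploiting in both the fact that a reconstructing coalition may use its members' \emph{own} secrets.

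For part (a) I would fix a participant $p$ and a set $T\subseteq P-\{p\}$ with $|T|=k-1$, and write $C=T\cup\{p\}$ (so $|C|=k$) and $J=P-C$ (so $|J|=n-k$). Let $a_p$ denote $p$'s share, $A_T$ the joint shares of $T$, and $S_C=\{s_i:i\in C\}$, $S_J=\{s_j:j\in J\}$ the corresponding secrets. Soundness applied to the qualified set $C$ gives $\H(S_J\mid A_T,a_p,S_C)=0$, since $C$ recovers every $s_j$ with $j\in J$ from its shares and its members' own secrets. By the chain rule this yields $\H(a_p\mid A_T,S_C)\ge\H(S_J\mid A_T,S_C)$. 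Expanding the right-hand side by the chain rule, each term conditions a single $s_j$ on $A_T$ together with a subset of the secrets other than $s_j$; since $T$ has only $k-1$ members, perfectness gives $\H(s_j\mid A_T,\{s_\ell:\ell\ne j\})=s$, and conditioning on fewer secrets cannot decrease this while $s_j$ carries only $s$ bits, so each term equals $s$. Hence $\H(S_J\mid A_T,S_C)=(n-k)s$, and because conditioning only decreases entropy, $\H(a_p)\ge\H(a_p\mid A_T,S_C)\ge(n-k)s$, so $p$'s share is at least $(n-k)s$ bits.

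For part (b) I would give a single-polynomial scheme. Work over $\mathbb F=\mathbb F_{2^s}$, which for $s$ large enough contains more than $n(n-k+1)$ elements; pick distinct $x_1,\dots,x_n$ (one per participant) and $z_{i,t}$ for $i\in P$, $1\le t\le n-k$, all distinct. Put $D=n-1+(k-1)(n-k)$, let the dealer choose a uniformly random polynomial $f$ of degree at most $D$, declare the secrets to be $s_i=f(x_i)$ (these are independent and uniform because $n\le D+1$), and hand participant $i$ the share $a_i=(f(z_{i,1}),\dots,f(z_{i,n-k}))$, which is exactly $n-k$ field elements, i.e.\ $(n-k)s$ bits.

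The verification rests on an off-by-one count of interpolation points. A qualified coalition $C\subseteq P-\{j\}$ of size $k$ pools its own secrets $f(x_i)$ and its shares $f(z_{i,t})$, a total of $k(n-k+1)=D+1$ distinct evaluation points, which determine $f$ and hence recover $s_j=f(x_j)$. A forbidden coalition $B\subseteq P-\{j\}$ of size $k-1$, even given all other secrets, knows $f$ only at the $n-1$ points $\{x_\ell:\ell\ne j\}$ and at its $(k-1)(n-k)$ points $z_{i,t}$, at most $D$ points, all different from $x_j$; since a uniform polynomial of degree at most $D$ conditioned on at most $D$ values at points other than $x_j$ leaves $f(x_j)$ uniform, $s_j$ stays perfectly hidden. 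The hard part is pinning down this degree $D$: it is forced by requiring that $k$ participants contribute exactly one interpolation point more than $k-1$ participants do once the latter are additionally handed all remaining secrets, and it is precisely the freedom to use the coalition's own secrets as $k$ extra evaluations that makes the part~(a) bound attainable — without it the target is not even achievable, as the extreme case $k=n-1$ shows.
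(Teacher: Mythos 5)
Your proposal is correct and takes essentially the same route as the paper: part (a) is the paper's entropy-method argument (its Lemma~\ref{lemma:auxsize}, with your $T$ playing the role of $G$, soundness of the $k$-set $C$, and perfectness of $T$ chained per secret to give $\H(S_J\mid A_T,S_C)=(n-k)s$) rephrased in conditional entropies, and part (b) is exactly the paper's interpolation scheme, since your degree bound satisfies $D+1=n+(k-1)(n-k)=k(n-k+1)$, i.e.\ ``degree at most $D$'' coincides with the paper's ``degree less than $k(n-k+1)$''. The verification by counting evaluation points, including the crucial use of the reconstructing coalition's own secrets as extra evaluations, matches Section~\ref{sec:protocol} of the paper.
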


We postpone the proof to Sections \ref{sec:bounds} and
\ref{sec:protocol}; here we illustrate the theorem for the case when $n=3$
and $k=2$. We have three participants whom we call Alice, Bob, and Cecil,
having secrets $a$, $b$, and $c$, respectively. The lower bound on the share
size is almost immediate. Bob has no information on Alice's and Cecil's
secret. When Alice joins Bob, the two of them have enough information to
determine both Alice's and Cecil's secret. This means $2s$ bits of
information which should come from Alice. Her secret is $s$ bit long, thus
her share must also be at least $s$ bit long to supply that much
information.

As for the construction, the dealer should tell them  
shares which have size equal to that of the secrets, so that
a) any two participant should be able to recover the secret of the third; 
and b) no one should have any information on the others' secrets.

To satisfy the first requirement, our first attempt is to give Alice, Bob,
and Cecil the shares $c$, $a$, and $b$, respectively. This way any two can
recover the secret of the third one, but these shares definitely contradict
the security requirement in Definition \ref{def:sound-and-perfect}.
So we ``hide'' these shares by xoring them with some value
possessed by others: let the three shares be $c\oplus b$, $a\oplus c$,
$b\oplus a$, respectively. Again, any two can recover the third's secret,
for example Alice knows $c\oplus b$, Bob knows $b$, thus they can recover $c$.
Unfortunately these shares also violate the security requirement. If Bob's
secret leaks out, or Alice simply guesses it right, then Alice alone could
recover Cecil's $c$. Also, if $b$ and $c$ are weak (and long) passwords,
then it is a simply routine to recover both $b$ and $c$ from $c\oplus b$.

A solution could be using interpolating polynomials \`a la Shamir. Let $r$
be a random polynomial which takes the secrets at its first three places:
$r(0)=a$, $r(1)=b$, and $r(2)=c$. Then let the shares be $r(3)$, $r(4)$, and
$r(5)$, respectively. Any pair of participants knows $r$ at four different
places, thus if $r$ has degree at most $3$, then they can recover the
polynomial $r$, thus
the third participant's secret as well. The security requirement also holds:
a single participant knows $r$'s value at two places. If one of the
two other secrets leak out, then it is $r$'s value at an additional place.
Given $r$'s value at (at most)
three places provides no information about what values $r$ can take at a
fourth place, and this is what was required.

As usual, the polynomial $r$ is over some finite field; the secrets are
random elements from this field. The above scheme will work when the field
has at least six distinct elements, thus we must have $s\ge 3$.

\section{You cannot do better than \dots}\label{sec:bounds}
In this section we show that the amount of share every participant in a
$k$ out of $n$ gruppen secret sharing scheme must have is at least
$s\cdot(n-k)$ bits, where every secret is an (independent, uniformly random)
$s$ bit long 0--1 word. This proves the first part of Theorem
\ref{thm:main}.

First we give an informal reasoning, then we make it precise using the
entropy method \cite{capocelli,entropy}.
Let $a_1,\dots,a_{k-1}$, and $b\in P$ be $k$ different participants. We want
to estimate the amount of private information $b$ must have. By assumption,
the totality of the private information the group $\{a_1,\dots,a_{k-1},b\}$
has determines uniquely the secrets of the remaining $n-k$ participants,
which amounts to $(n-k)\cdot s$ bits. By the security requirement,
whatever $\{a_1,\dots,a_{k-1}\}$ know should be independent of those secrets
{\em plus} the secret of $b$. Thus the additional $(n-k)\cdot s + s$ bits of
information must be supplied by $b$. He has $s$ bits of secret, thus must
have a share of size at least $(n-k)\cdot s$ bits.

\bigskip
\newcommand\secret{\mathrm{secret}}
\newcommand\share{\mathrm{share}}
The above reasoning can be made precise using the so-called entropy method
as described in, e.g., \cite{capocelli} or \cite{entropy}. First of all, we
consider the secrets and shares as random variables.
The {\em size} of the value of a random variable $\xi$ is its Shannon entropy
 $\H(\xi)$, which
is (roughly) the number of necessary (independent) bits to define the value
of $\xi$ uniquely. Our assumption was that the secrets are independent
$s$ bit long 0--1 sequences, thus $\H(\xi\eta)=2s$, where $\xi$ and $\eta$
are the secret values of two participants.

For any collection $\{\xi_i: i\in I\}$ of random variables define the
real-valued function
$$
    f(I) = \H(\{\xi_i : i\in I\} )
$$
where the entropy is taken for the joint distribution of all indicated
variables. For example, if $a$ and $b$ are (indices of) two participant's
secrets, then $f(\{a\})=s$, and $f(\{a,b\})=2s$ as we have seen above.
Similarly, if $j$ is (an index of) a share, then $f(\{j\})$ is
the {\em size} of that share.

The function $f$ is defined on all subsets of some finite set, and satisfies
certain linear inequalities which follow from the so-called Shannon
inequalities for the entropy function $\H$. The following claim collects
those properties which will be used to prove the theorem. As usual, we write
$f(XY)$ instead of $f(X\cup Y)$, and $f(x)$ and $f(xX)$ instead of
$f(\{x\})$ and $f(\{x\}\cup X)$.
\begin{claim}[\rm See \cite{entropy}]\label{claim:entropy-method}
For any subsets $X$ and $Y$
\begin{enumerate}
\item $f(X)\ge 0$ (positivity), 
\item $f(X)\le f(Y)$ if $X\subseteq Y$ (monotonicity),
\item $f(X)+f(Y)\ge f(XY)$ (additivity),
\item $f(XY)=f(X)$ if (the variables in) $X$ determines the values of (the variables
in) $Y$;
\item $f(XY)=f(X)+f(Y)$ if $X$ and $Y$ are statistically independent.
\end{enumerate}
\end{claim}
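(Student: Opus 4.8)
The plan is to derive all five items from the definition of Shannon entropy together with a single structural identity, the chain rule $\H(XY)=\H(X)+\H(Y\mid X)$, where $\H(Y\mid X)$ denotes conditional entropy. Everything then reduces to two facts about $\H(Y\mid X)$: that it is always nonnegative, and that it never exceeds $\H(Y)$.

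First I would dispose of the four items that follow immediately. Positivity (1) holds because $\H$ is by definition a sum of terms $-p\log p$ with $0\le p\le 1$, each of which is nonnegative. For the deterministic case (4), if the variables in $X$ determine those in $Y$ then $Y$ is constant once $X$ is fixed, so $\H(Y\mid X)=0$ and the chain rule gives $f(XY)=\H(X)=f(X)$. For the independent case (5), when $X$ and $Y$ are statistically independent we have $\H(Y\mid X)=\H(Y)$, whence the chain rule yields $f(XY)=\H(X)+\H(Y)=f(X)+f(Y)$. Monotonicity (2) is nearly as quick: for $X\subseteq Y$ write $Z=Y\setminus X$, so that $f(Y)=f(X)+\H(Z\mid X)$ by the chain rule, and since conditional entropy is nonnegative we get $f(Y)\ge f(X)$.

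The only item demanding real work is subadditivity (3), $f(X)+f(Y)\ge f(XY)$. By the chain rule this is equivalent to $\H(Y\mid X)\le\H(Y)$---conditioning cannot increase entropy---or, what is the same, to nonnegativity of the mutual information $I(X;Y)=\H(Y)-\H(Y\mid X)$. This is the heart of the matter and the one place where an analytic ingredient enters. I would prove $I(X;Y)\ge 0$ by writing it as the Kullback--Leibler divergence between the joint distribution of $(X,Y)$ and the product of its marginals, and then invoking Gibbs' inequality (nonnegativity of relative entropy), which is itself a one-line consequence of the concavity of the logarithm via Jensen's inequality. I expect this single concavity estimate to be the only nonelementary step: once it is in hand, every inequality in the claim follows mechanically from the chain rule.
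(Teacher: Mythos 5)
Your proof is correct, but there is no in-paper argument to compare it against: the paper states this claim as a known fact, citing \cite{entropy} (and implicitly \cite{capocelli}) for the standard Shannon inequalities, and never proves it. What you have written is precisely the canonical derivation those sources rest on --- chain rule $\H(XY)=\H(X)+\H(Y\mid X)$, nonnegativity of conditional entropy, and Gibbs' inequality for the one genuinely analytic step, the subadditivity $I(X;Y)\ge 0$ --- so you have correctly filled in what the paper delegates to a citation. Two small bookkeeping points, neither a gap. First, since $X$ and $Y$ here are \emph{sets of variables} that may overlap, in item~3 the reduction ``$f(X)+f(Y)\ge f(XY)$ is equivalent to $\H(Y\mid X)\le\H(Y)$'' needs the one-line observation that $f(XY)=\H(\xi\eta)$ where $\xi,\eta$ are the tuples of variables in $X$ and $Y$ respectively: the duplicated overlap costs nothing because either tuple determines the shared coordinates, so subadditivity for the pair $(\xi,\eta)$ applies verbatim (your treatment of item~2 via $Z=Y\setminus X$ already shows you are aware of this issue). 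Second, for item~2 you should note that $\H(Z\mid X)\ge 0$ is itself a consequence of item~1, since conditional entropy is an average over values $x$ of the (nonnegative) entropies of the conditional distributions; this keeps the whole derivation self-contained with Jensen's inequality as the single external ingredient, exactly as you claim.
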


The entropy method can be rephrased in a few words as follows. Let $j$ be
the (index) of any share. Suppose for {\it any function $f$} satisfying 
properties enlisted in Claim \ref{claim:entropy-method} there are (indices)
$a_1$, $\dots$, $a_\ell$ of secrets such that
$$
    f(j) \ge f(a_1)+\cdots f(a_\ell).
$$  
The the size of share $j$ must be at least $\ell$ times the size of the
secrets.

\begin{lemma}\label{lemma:auxsize}
Suppose $G$ is a group of participant with $k-1$ members, and $a$, $\bar
b=\langle b_1,\dots,b_{n-k}\rangle$ are the (indices of the)
secrets of participants not in $G$ and (the indices of) their shares are 
$j$ and $\bar \jmath
=\langle j_1,\dots,j_{n-k}\rangle$, respectively. Then
$$
    f(a)+f(j) \ge f(a \bar b).
$$
\end{lemma}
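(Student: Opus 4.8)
The plan is to derive $f(a)+f(j)\ge f(a\bar b)$ purely from the five properties of Claim~\ref{claim:entropy-method}, fed by two structural facts. Write $g$ for the (indices of the) shares of the $k-1$ members of $G$, and let $p$ be the participant outside $G$ whose secret is $a$ and whose share is $j$. The first fact is \emph{soundness}: the set $G\cup\{p\}$ has exactly $k$ members, so it is qualified to recover the secret of every participant outside it; in particular the shares $g$ together with $j$ determine the whole tuple $\bar b$. By property~4 this gives
$$
   f(a g j\,\bar b)=f(a g j).
$$
The second fact is \emph{independence from $G$}: I claim the shares $g$ are statistically independent of the entire tuple $a\bar b$ of outside secrets, so that property~5 yields $f(a\bar b\, g)=f(a\bar b)+f(g)$ and, marginally, $f(ag)=f(a)+f(g)$.

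Granting these, the proof is a short chain. First I would use independence to split off $g$, then enlarge the set by the share $j$ via monotonicity (property~2), then contract $\bar b$ away by soundness (property~4), and finally break the result with subadditivity (property~3):
$$
   f(a\bar b)+f(g)=f(a\bar b\, g)\le f(a\bar b\, g j)=f(agj)\le f(ag)+f(j)=f(a)+f(g)+f(j).
$$
Cancelling the finite quantity $f(g)$ from the two ends leaves exactly $f(a\bar b)\le f(a)+f(j)$, as required.

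The one step needing real care — and the main obstacle — is justifying the joint independence $a\bar b\perp g$. Perfect security (Definition~\ref{def:sound-and-perfect}) is stated one secret at a time: since $G$ has only $k-1$ members it lies outside $\mathcal A_i$ for every $i\notin G$, so for each such secret we get $\H(s_i\mid g, s_{\ne i})=\H(s_i)$, i.e.\ $s_i$ is independent of $g$ \emph{together with} all the other secrets. To upgrade these per-variable statements to independence of the full outside tuple from $g$, I would enumerate the outside secrets $d_1,\dots,d_{n-k+1}$ and expand $\H(a\bar b\mid g)$ by the chain rule; each term $\H(d_t\mid g,d_1,\dots,d_{t-1})$ is squeezed between $\H(d_t)$ (conditioning cannot increase entropy) and $\H(d_t\mid g,s_{\ne d_t})=\H(d_t)$ (the per-secret bound, since $\{g,d_1,\dots,d_{t-1}\}\subseteq\{g,s_{\ne d_t}\}$), forcing every term to equal $\H(d_t)$ and hence $a\bar b\perp g$. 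It is worth flagging that the weaker fact $a\perp g\bar b$ on its own does \emph{not} suffice: it leaves an uncancelled $f(g)$ in the chain above, which is precisely why the full independence of \emph{all} outside secrets from the group's shares is the load-bearing ingredient.
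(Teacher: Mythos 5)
Your proof has a genuine gap, and the offending step is not merely unjustified but actually false. In the soundness step you claim that the shares $g$ of $G$ together with $j$ (and $a$) determine $\bar b$, i.e.\ $f(agj\bar b)=f(agj)$. But soundness in Definition~\ref{def:sound-and-perfect} only lets a qualified set use its members' \emph{private information}, which in the gruppen setting comprises their shares \emph{and their own secrets}; you have dropped the secrets of the $k-1$ members of $G$. This matters: in the paper's own optimal scheme of Section~\ref{sec:protocol}, the data $a,g,j$ amount to the values of the random polynomial $r$ (of degree less than $k(n-k+1)=k(n-k)+k$) at only $(k-1)(n-k)+(n-k)+1=k(n-k)+1$ points, which for $k\ge 2$ leaves at least $k-1\ge 1$ degrees of freedom, so the value $r(x_{b_1,0})$ is still \emph{uniformly distributed} conditioned on $a,g,j$. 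Concretely, for $n=3$, $k=2$ with $G=\{\mathrm{Bob}\}$ and $p=\mathrm{Alice}$: $r$ has degree at most $3$, and $r(0),r(3),r(4)$ are only three values, which do not determine Cecil's secret $r(2)$; Bob's secret $r(1)$ is indispensable. So your equality $f(a\bar b\,gj)=f(agj)$ fails in the very scheme the lemma is matched against, and the chain as written does not prove the lemma.

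The repair is exactly the paper's move: let the symbol for the group stand for its \emph{total} data, secrets plus shares. Writing $c$ for the secrets of $G$'s members and replacing $g$ by $gc$ throughout, your chain becomes
$$
  f(a\bar b)+f(gc)=f(a\bar b\,gc)\le f(a\bar b\,gc\,j)=f(agcj)\le f(agc)+f(j)=f(a)+f(gc)+f(j),
$$
and now the third relation is a correct use of soundness, since $G\cup\{p\}$ has $k$ members and $agcj$ is its full private information. Your chain-rule derivation of the joint independence goes through verbatim for $a\bar b\perp gc$: the secrets $c$ are among the conditioning variables $s_{\ne d_t}$ in the perfectness condition, so $\H(d_t\mid gc,\,s_{\ne d_t})=\H(d_t)$ squeezes each term just as before, and $f(agc)=f(a)+f(gc)$ follows by marginalizing. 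With this substitution your argument coincides with the paper's proof (which denotes the total data of the group again by $G$ and asserts $f(a\bar b G)=f(a\bar b)+f(G)$ without proof); indeed your explicit upgrade of the per-secret security statement to joint independence of the whole outside tuple is a detail the paper glosses over, and that part of your write-up is correct and worth keeping.
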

\begin{proof}
Let us denote the total data (secret plus share) held by $G$ by
$G$ as well. By assumption, $G$ together with $a$ and $j$ determines
all the secrets $b_i$, that is $f(ajG) = f(a\bar bjG)$. Also, $G$ should 
have no information on the secrets $a$ and $b_i$ or on their combinations, 
thus $f(a\bar bG) = f(a\bar b) + f(G)$. Using these, the additivity and 
monotonicity property of $f$, we have
$$
    f(a)+f(j)+f(G) \ge f(ajG) = f(a\bar bjG) \ge f(a\bar bG) = 
          f(a\bar b)+ f(G) .
$$
Comparing the first and last tag gives the claim of the Lemma.  
\end{proof}

From this lemma we can easily deduct the required lower bound on the size of the share
each participant receives.

\begin{proof} (of first part of Theorem \ref{thm:main})
Use notations from Lemma \ref{lemma:auxsize}, in particular let $a$ and $j$
respectively be (indices of) the secret and the share of participant $a$. All
secrets have the same size, thus
$$
    f(a)=f(b_1)=\cdots = f(b_{n-k}).
$$
By assumption the secrets are totally independent, which means
$$
   f(a\bar b) = f(ab_1\dots b_{n-k})=f(a)+f(b_1)+\cdots+f(b_{n-k})
      =(n-k+1)f(a).
$$
From Lemma \ref{lemma:auxsize} we know that $f(j)\ge f(a\bar b) - f(a) =
(n-k)f(a)$, which proves part a) of Theorem \ref{thm:main}.
\end{proof}

\section{An (optimal) protocol}\label{sec:protocol}

Our $k$ out of $n$ gruppen secret sharing scheme, whose complexity matches
the bound given in Section \ref{sec:bounds}, is a straightforward
generalization of the one sketched in Section \ref{sec:defs}. Let $\mathbb
F$ be a finite field with more than $n(n-k+1)$ elements. Secrets will be 
chosen uniformly and
independently from $\mathbb F$, which means that if secrets are $s$ bit long
0--1 sequences, then $\mathbb F$ can be chosen to be the field of
characteristic $2$ on $2^s$
elements. To give a scheme means to describe how the dealer computes
(determines) the shares given the randomly and uniformly chosen secrets; or,
equivalently, how the dealer can distribute the shares {\em and the
secrets} simultaneously as long as the secrets
come from the appropriate distribution. We will choose this latter approach,
and hint how to modify the scheme when the secrets are given in advance.

Let $p_i$ for $1\le i \le n$ denote the participants. The dealer chooses
different field elements $x_{i,j}$ for $1\le i\le n$ and $0\le j\le n-k$,
and picks a random polynomial $r(x)$ over $\mathbb F$ of degree less than
$k(n-k+1)$.

The {\em secret} of participant $p_i$ will the the value of $r$ at
$x_{i,0}$. (When given the secrets in advance, the same distribution can be
achieved by simply choosing $r$ randomly from among those polynomials which
give the secret of $p_i$ at $r(x_{i,0})$.) As for the share, the dealer
gives participant $p_i$ all field elements $r(x_{i,1})$ up to
$r(x_{i,n-k})$. Observe that secrets are uniform random elements from
the field, thus the ``size'' (entropy) of every secret is the same, namely
$\log_2(|\mathbb F|)$. Similarly, all participants receive $(n-k)$ field
elements as share, therefore the size of the share is exactly $(n-k)$ times
that of the secret.

We claim that any $k$ participants can determine the secret value
of the remaining $n-k$ participants. This is clear, as the $k$ participants
know the value of $r$ at $k(n-k+1)$ different places, while $r$ has
smaller degree, thus they can determine $r$, and its value at $x_{p,0}$ for
any participant $p$.

Next, we claim that the total information of $k-1$ participants is
statistically independent of the secrets of the other $n-k+1$ participants.
This is true as $r$ is a random polynomial of degree below $k(n-k+1)$, and
$k-1$ participants know the value of this polynomial at $(k-1)(n-k+1)$
places, thus the polynomial can take all the possibilities with equal
probability at any $n-k+1$ predetermined places -- in particular at
$x_{p,0}$ where $p$ runs over the missing $n-k+1$ participants.
Consequently, all private information of $k-1$ participants, {\em plus} the
secret of all but one remaining participants is statistically independent of
the secret of the last participant. This is exactly the security requirement
which proves the second part of Theorem \ref{thm:main}.

\section{Secret recovery}\label{sec:recovery}
The method outlined in the previous Section to recover the secret of $p\in
P$ was that $k$ participants, using their private values, recover the
polynomial $r$, and then compute $r$'s value at $x_{p,0}$. This
recovery process has the drawback that once $r$ is known, all secrets are
revealed, not only the secret of $p$.
How can we achieve that they
recover the value of $r$ at $x_{p,0}$ only and not the whole polynomial
$r$? Let $B \subseteq \{1,\dots,n\}$ be the subset of size $k$ which wants
to recover the secret of $p\notin B$. As the values $x_{i,j}$ are publicly
known, everyone can compute the constants $\lambda_{i,j}\in\mathbb F$
using, e.g., the Lagrange interpolation formula such that
$$
    r(x_{p,0})=\sum_{i\in B} \sum_{j=0}^{n-k} \lambda_{i,j}r(x_{i,j})
$$
independently what the values $r(x_{i,j})$ are.
Consequently to recover $p$'s secret, participant $i\in B$ should only
compute the sum
\begin{equation}\label{eq:sum}
   t_{i,p} =  \sum_{j=0}^{n-k}\lambda_{i,j}r(x_{i,j})
\end{equation}
and send it privately to $p$, rather than revealing all the $r(x_{i,j})$ 
values. $p$ receives the $k$ values (\ref{eq:sum}) from participants in
$B$, he simply adds them up to recover his secret.

\medskip
Unfortunately this process leaks out information, and cannot be repeated
indefinitely. To see why this is the case, we go back to the 2 out of 3
gruppen secret sharing scheme as discussed in Section \ref{sec:defs}. Alice,
Bob, and Cecil have secrets $r(0)$, $r(1)$, and $r(2)$, and have shares
$r(3)$, $r(4)$, and $r(5)$, respectively. Now Alice announces that she lost
her secret. Lagrange says that
$$
   r(0)=\frac{10}3r(1) + \frac53r(4) - \frac{10}3f(2)-\frac23r(5),
$$
thus Bob sends Alice the value
$$
   t_b = \frac{10}3r(1) + \frac53r(4),
$$
and Cecil sends
$$
   t_c =  - \frac{10}3f(2)-\frac23r(5).
$$
The Alice could recover her secret as $t_b+t_c$. However, Alice could be
cheating, as she still have her share $r(3)$. Again by Lagrange
$$
   r(3) = -\frac16r(1) + \frac23r(4) + \frac23 r(2)- \frac16r(5).
$$
Alice can eliminate $r(4)$ and $r(5)$ using the values she received from Bob
and Cecil, thus she knows
$$
   \frac23\big(r(3) - \frac25t_b - \frac14t_c\big) = -r(1) +  r(2)
$$
a nontrivial combination of Bob's and Cecil's secrets. Thus if any of those
two secrets leak out, or Alice could successfully guess it, she'll know the
other secret immediately.

\medskip 
Switching to linear algebra from polynomial interpolation, a random
polynomial of degree less than $k(n-k+1)$ can be considered as a random
vector if the $k(n-k+1)$-dimensional space. Knowing the value at a certain
place amounts to know a (fixed) linear combination of the coefficients of
the random vector. Initially every participant knows $(n-k+1)$ such linear
combinations. Thus the codimension of $k-1$ participants private information
is $n-k+1$, it is just the linear space where the remaining $n-k+1$
participants have their secrets.

During the recovery procedure $p$ receives $k$ further linear combinations
(\ref{eq:sum}). As these add up to his secret, the number of new linear 
combinations he knows is $(k-1)$ more. Thus if this process $p$ is joined 
by $k-2$ other participants, the codimension of their information is
$n-2(k-1)$, thus must leak some information about the other's secrets.

\medskip
A possible remedy is to keep track the codimension of the total information
of any $k' \le k$ participants, and let run the recovery process until it is
large enough. Also, if $p$ announces that he lost his secret, then $p$
should be excluded in any further recovery stage.

Another remedy is to use freshly generated random values which hide the
exact values of the sums (\ref{eq:sum}) from $p$. This solution has the
drawback that it increases the communication overhead, requires some further
(trivial) communication. However it has further advantages:
\begin{itemize}
\item anyone's secret can be recovered arbitrary number of times without
affecting the security level;
\item not only the secret, but also the shares can be recovered without
significant increase in the communication, thus recovering the ``full
state'' after a break down.
\end{itemize}
As before, let $B$ be the $k$-element set of participants who want to
recover $p$'s secret. Each $i\in B$ generates $k$ random and independent
elements from $\mathbb F$, say $r_{i,j}$, $j\in B$. Then $i$ sends
$r_{i,j}$ to $j$. After this step $i$ will know all $r_{i,j}$ (as he 
generated those numbers), and $r_{j,i}$ (as he received them from the
others). After this $i$ sends $p$ the obfuscated element
\begin{equation}\label{eq:randomsum}
   t'_{i,p} = t_{i,p} + \sum_{j\in B}(r_{i,j}-r_{j,i}) .
\end{equation}
After receiving all sums in (\ref{eq:randomsum}), $p$ simply adds them up
and recovers his secret.

Rather than interpolating the polynomial $r$ at the place $x_{p,0}$ only,
participants in $B$ can interpolate $r$ at every $x_{p,j}$ and compute the
sums similar to (\ref{eq:sum}). In the
obfuscating step everyone generates $k(n-k+1)$ random elements ($k$ for each
$j$) independently, and then sends the $(n-k+1)$ obfuscated interpolation
sums to $p$, who can recover his secret plus all the shares. This way no
private information is leaked out, and the whole process can be repeated
indefinitely.

\section{How to distribute the shares}\label{sec:beginning}

Any secret sharing scheme relies on a trusted {\em dealer} to set up the
scheme, who collects the secrets from the participants, generates the
shares, and tells the every participant her share privately, and then
disappears without leaking out any information. Such a trusted entity
is quite hard to find, protocols not relying on trusted party are
preferable to ones which use one. Fortunately, the scheme described in
Section \ref{sec:protocol} has the {\em homomorphic property}: if a scheme
distributes secrets $s_i$ and with shares $h_i$ for $i\in P$, another scheme
distributes secrets $s'_i$ and has shares $h'_i$, then for the secrets
$s_i+s'_i$ the shares $h_i+h'_i$ are correct ones, and have the appropriate
distribution. Here the addition is the addition in the field $\mathbb F$.

Using this homomorphic property, a gruppen secret sharing scheme can be
set up by the participants as follows. Suppose participant $i\in B$ has the
secret $s_i$. He computes, as a dealer, the shares of the $k$ out of $n$
gruppen secret sharing scheme as described in Section \ref{sec:protocol}, 
where all secrets are zero, except his own,
which is $s_i$. He generates the shares $h_{i,j}$ for $j\in P$, and then
sends $h_{i,j}$ to participant $j$.

Each participant receives shares from everyone else (including himself), and
his share in the final scheme is just the sum of all shares received. As a
consequence of the homomorphic property, in this way the participants
achieved a correct scheme which distributes their secrets. In an ideal
scheme the participant $p\in P$ receives only his share from the dealer; now
every participant receives an $(n-k)$-dimensional vector of $\mathbb F$ from
the others such that his share is the sum of the vectors received. Thus it
might happen that a coalition of $k-1$ participants could extract extra
information from the values they received. We claim that this is not the
case, this set-up is, in fact, $k-1$-secure.

Let us fix a coalition $B$ of $k-1$ participants, and look at what they
receive from $p\notin B$. $p$ generates a random polynomial of degree ${}<
k(n-k+1)$ which takes zero at $x_{i,0}$, $i\not= p$, and $p$'s secret at
$x_{p,0}$. To make the polynomial random, its value should be given {\em
randomly and independently} at further $k(n-k+1)-n = (k-1)(n-k)$ places. Now
participants in the coalition $B$ receive the polynomial's value at exactly
$(k-1)(n-k)$ places, thus the random polynomial can be set up by choosing its
value at these places randomly and independently. This also means that
members of $B$ can extract no information from the values they receive from
$p$.

\end{document}